\documentclass[smallextended]{svjour3}       

\usepackage{etex} 
\usepackage{url,amssymb, bm, mathrsfs,amsmath}

\usepackage{fullpage}

\usepackage{graphicx,psfrag}
\usepackage[pdf]{pstricks}  
\graphicspath{{./figures/},{./20170831-version/}}

\usepackage[colorlinks=true,linkcolor=blue,backref=page]{hyperref}
\renewcommand*{\backref}[1]{}



\newcommand{\st}{\mathop{\rm ~s.t.~}}




\title{Supervisory Control of Discrete-event Systems under
  Attacks\thanks{This work was supported by the JSPS KAKENHI Grant
    Number JP17K14699, the National Science Foundation award
    No.~1136174, and the U.S.~Office of Naval Research under MURI
    grant No.~N00014-16-1-2710.}}

\author{
  Masashi Wakaiki
  \and
  Paulo Tabuada
  \and
  Jo\~ao P. Hespanha
}

\date{October 30, 2017}

\institute{ %
  Masashi Wakaiki \at Graduate School of System Informatics, Kobe
  University, Kobe, 657-8501, Japan,
  \email{wakaiki@ruby.kobe-u.ac.jp}.
  \and
  Paulo Tabuada
  \at Department of Electrical Engineering, University of California, Los
  Angeles, CA 90095-1594, USA, \email{tabuada@ee.ucla.edu}.
  \and
  Jo\~ao P. Hespanha
  \at Center for Control, Dynamical-systems, and Computation, University
  of California, Santa Barbara, CA 93106-2560, USA,
  \email{hespanha@ucsb.edu}.
}

\begin{document}
\maketitle

    









\begin{abstract}
  We consider a multi-adversary version of the supervisory control
  problem for discrete-event systems, in which an adversary corrupts
  the observations available to the supervisor. The supervisor's goal
  is to enforce a specific language in spite of the opponent's actions
  and without knowing which adversary it is playing against. This
  problem is motivated by applications to computer security in which a
  cyber defense system must make decisions based on reports from
  sensors that may have been tampered with by an attacker. We start by
  showing that the problem has a solution if and only if the desired
  language is controllable (in the Discrete event system classical
  sense) and observable in a (novel) sense that takes the adversaries
  into account. For the particular case of attacks that
    insert symbols into or remove symbols from the sequence of sensor
    outputs, we show that testing the existence of a supervisor and
    building the supervisor can be done using tools developed for the
    classical DES supervisory control problem, by considering a family
    of automata with modified output maps, but without expanding the
    size of the state space and without incurring on exponential
    complexity on the number of attacks considered.
\end{abstract}

\keywords{Supervisory control; discrete event systems; game theory;
  computer security}

\section{Introduction}

\emph{Discrete event systems} (DESs) are non-deterministic transition
systems defined over a typically finite state-space. The DESs
supervisory control problem refers to the design of a feedback
controller --- called a \emph{supervisor} --- that restricts the set
of possible sequences of transitions (typically represented by
\emph{strings} over an alphabet of transitions) to a desired set $K$.
The supervisor's task is complicated by the fact that (i) only a
subset of transitions can be inhibited (the so called ``controllable''
transitions) and (ii) the supervisor only has partial information
about the state of the system, which it gathers by observing a string
of ``output symbols.'' This basic problem is motivated by a wide range
of applications that include manufacturing systems, chemical batch
plants, power grids, transportation systems, database management,
communication protocols, logistics, and computer security. The latter
is the key motivating application for the work reported here.


\medskip

We consider a zero-sum multi-adversary
version of the supervisory control problem, where one player (the
supervisor) faces multiple adversaries with distinct action spaces and
the supervisor needs to find a policy that wins against any of its
possible adversaries, without knowing which one is the actual
opponent.
The supervisor loses if it accepts strings
outside the desired set $K$ or if it rejects strings within
$K$.
We consider a zero-sum game in which the adversary tries to make
the supervisor lose by manipulating the string of output symbols that
the supervisor uses to make decisions (see
Figure~\ref{fig:closed_loop_under_attack}).

\medskip

In computer security applications, the supervisor is typically
responsible for enacting defense mechanisms, such as opening/closing
firewalls, starting and stopping services, authorizing/deauthorizing
users, and killing processes. Such decisions are based on observations
collected by cyber-security sensors that log events like user
authentication, network traffic, email activity, and access to
services or files. Ultimately, the supervisors' goal is to allow
users to perform all the tasks they are authorized to do, while
preventing unauthorized access to resources and services.
%
%
The zero-sum multi-adversary game is motivated by scenarios where a
cyber attacker manipulates the supervisor's observations by tampering
with one or more of the cyber-security sensors. A key challenge is that
security mechanisms typically do not know if a particular sensor has
been compromised so it needs to consider multiple alternatives for
sensor manipulation, which explains the need for the multi-adversary
model.

\begin{figure}[tb]
  \centering
  \includegraphics[width = 9cm]
  {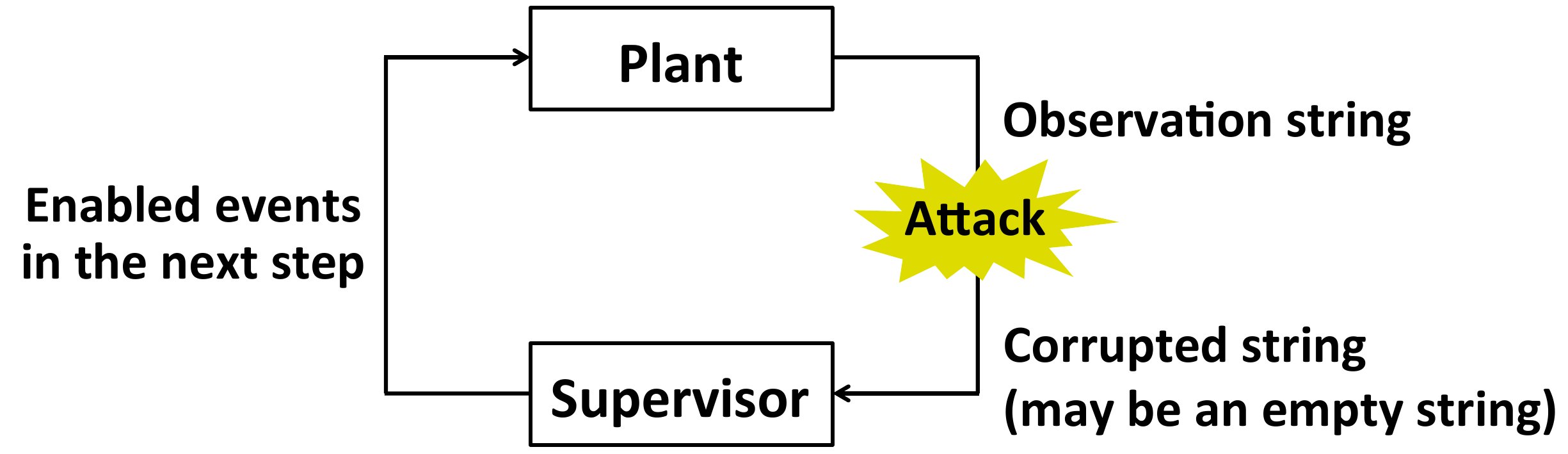}
  \caption{Closed-loop system under attacks.}
  \label{fig:closed_loop_under_attack}
\end{figure}

\medskip

One of the paper's main results is a general necessary and sufficient
condition for the multi-adversary game to have a ``solution'', i.e.,
for the existence of a supervisor that can win the game against any of
the adversaries, without knowing which one it is playing against
(Section~\ref{se:existence}). This necessary and sufficient condition
is expressed in terms of a controllability condition, which also
arises in the classical supervisory control setup and a novel
observability condition that depends on how the adversaries can
manipulate/attack the output string. The result is constructive in the
sense that it provides an explicit formula for the supervisor, in case
one exists, but this general formula is typically hard to use in
practice.

\medskip

The second part of the paper is focused on ``output-symbol attacks'',
which correspond to games where each of the adversaries is restricted
to attack a specific set of symbols. Specifically, each adversary is
able to insert into or remove from the output string symbols from a
given set. In the context of computer security, we can view such
adversaries as attackers that infiltrated a security sensor and
insert/remove entries from the corresponding security log. For
output-symbol attacks, we show that testing observability and
generating supervisors can be done using tools developed for classical
supervisory control problems.

\medskip

For output-symbol attacks, we show that testing the new notion of
observability under attacks can be done by checking the classical
notion of observability for a family of DESs that differ from each
other only by their output maps
(Section~\ref{se:output-symbol-attacks-test}). These output maps are
obtained from the original output map by considering all pairs of
adversaries and \emph{removing} from the output string every symbol
that 2 adversaries could attack if they were allowed to combine
their efforts (which they are not). This shows that observability
under attacks can still be tested in polynomial time.

\medskip

Also for output-symbol attacks, we show that a supervisor for the
multi-adversary case can be constructed from a family of classical
supervisors, each constructed for one of the classical DES supervisory
control problems used for the observability test
(Section~\ref{se:output-symbol-attacks-supervisor}). Since
constructing classical supervisors has exponential worst-case
complexity on the number of states, this procedure also has
exponential worst-case complexity, but it has the desirable feature
that it does not enlarge the state-space of the original system. This
is in sharp contrasts with the alternative approach of reducing the
multi-adversary supervisory control problem to a classical supervisory
control problem, which generally requires expanding the state-space
$M$ times, where $M$ is the number of adversaries.

\medskip

Supervisory control theory provides a natural framework to address
plant uncertainties and faults, either in a robust control context
(e.g., \cite{Lin1993,Takai2000,Saboori2006}) or in a fault tolerant
context (e.g., \cite{Sanchez2006,Paoli2011,Shu2014}). Several aspects
of security have also been considered in the DES literature, including
the stealthy deception attacks in \cite{Amin2013,Teixeira2015} that
aim at injecting false information without being detected by the
controller; or the opacity of DESs in
\cite{Takai2008,Dubreil2010,Saboori2012,Wu2014} and references there
in, whose goal is to keep the system's secret behavior uncertain to
outsiders. Intrusion detection in the DES framework has been
investigated in \cite{Thorsley2006,Whittaker2008,Hubballi2011}, with
the goal of guaranteeing the confidentiality and integrity of DESs.

\medskip

The key novelty of the problem considered here is the focus on DES
problems against adversaries that manipulate the observations available
to the supervisor. By focusing on this type of attack, we obtain tests
for the existence of a supervisor and algorithms to generate the
supervisor that do not require an expansion of the state-space of the
original (not attacked) DES.

\medskip

DESs with non-deterministic (set-valued) observation maps have been
studied in \cite{Xu2009,Ushio2016,yin2017supervisor}, where the
problem formulation could be viewed has a version of the problem
considered here but with a single adversary.
%
The authors in \cite{Xu2009} introduced the notion of lifting and
synthesized supervisors by converting the original problem with
nondeterministic observations to a problem in the lifted domain with
deterministic observations.  In \cite{Ushio2016,yin2017supervisor},
the so-called Mealy automata were used to deal with both
state-dependent observations and non-deterministic outputs.
The work reported here is inspired by the research on state estimation
under sensor attacks developed in
\cite{Fawzi2014,Shoukry2016,Chong2015}, where the authors consider a
linear time invariant system with multiple outputs and an attacker
that can manipulate a subset of the sensors that measure these
outputs.
The goal is to estimate the system's state and stabilize it using
feedback control, without knowing which sensors have been compromised.

\section{Background on supervisory control of discrete-event systems}\label{se:background}

We start by recalling basic notation and definitions that are common
in the Discrete-Event Systems (DES) literature.

\medskip
  
For a finite set $\Sigma$ of \emph{events,} we denote by $|\Sigma|$ the number
of elements of $\Sigma$ and by $\Sigma^*$ the set of all finite strings of
elements of $\Sigma$, including the empty string $\epsilon$. A subset $L$ of
$\Sigma^*$ is called a \emph{language over the alphabet $\Sigma$} and the
\emph{prefix closure} of $L$ is the language
\begin{align*}
  \bar L := \big\{u\in\Sigma^*:\exists v\in\Sigma^*,~u v\in L\big\},
\end{align*}
where $u v$ denotes the concatenation of two strings in $\Sigma^*$. The
language $L$ is said to be \emph{prefix closed} if $L=\bar L$.  We
define the concatenation of languages $L_1, L_2 \subset S^{*}$ by
\begin{align*}
  L_1L_2 :=
  \{
  w_1w_2 \in \Sigma^*:
  w_1 \in L_1,~w_2 \in L_2
  \}.
\end{align*}
Consider an \emph{automaton} $G = (X, \Sigma, \xi, x_0)$, where
$X$ is the set of states, $\Sigma$ the nonempty finite set of events,
$\xi:X\times \Sigma \to X$ the transition mapping (a partially defined function), and
$x_0 \in X$ the initial state.  We write
$\xi(x,\sigma)!$ to mean that $\xi(x,\sigma)$ is defined.  The transition function
$\xi$ can be extended to a function $X\times \Sigma^* \to X$ according to the
following inductive rules
\begin{align*}
  \xi(x,\epsilon) &:= x, & \forall& x \in X\\
  \xi(x,w\sigma)&:=
  \begin{cases}
    \xi(\xi(x,w), \sigma) &
    \text{if $\xi(x,w)!$ and
      $\xi(\xi(x,w), \sigma)!$} \\
    \text{undefined} & \text{otherwise},
  \end{cases}
  & \forall& x \in X,\; w \in \Sigma^*, \;\sigma \in \Sigma.
\end{align*}
The \emph{language generated by $G$} is then defined by
\begin{align*}
  L(G) := \{ w \in \Sigma^*: \xi(x_0,w)!\}.
\end{align*}


Let $\Sigma$ be a set of events that can be partitioned in two disjoint
sets as $\Sigma=\Sigma_c\cup\Sigma_{uc}$, where $\Sigma_c$ is called the \emph{set of
  controlled events} and $\Sigma_{uc}$ the \emph{set of uncontrolled
  events}.  For a language $L$ defined on $\Sigma$, a prefix-closed set
$K \subset L$ is said to be \emph{controllable} if $ K \Sigma_{uc} \cap L \subset K.  $

\medskip

Consider an \emph{observation map}
$P:\Sigma\to(\Delta\cup\{\epsilon\})$ that maps the set of events
$\Sigma$ into a set of \emph{observations $\Delta$} (augmented by the empty
event $\epsilon$). This observation map $P$ can be extended to the map
defined for strings of events according to $P(\epsilon) = \epsilon$ and
\begin{align*}
  P(w\sigma)=P(w)P(\sigma), \quad
  \forall w\in\Sigma^*,~\sigma \in \Sigma.
\end{align*}
A prefix-closed language $K\subset L$ is \emph{$P$-observable with respect
  to $L$} if
\begin{align*}
  \ker P \subset act_{K\subset L}
\end{align*}
where $\ker P$ denotes the relation on $\Sigma^*$ defined by
\begin{align*}
  \ker P&:= \big\{(w,\bar w)\in\Sigma^*\times\Sigma^*:P(w)=P(\bar w)\big\}
\end{align*}
and $act_{K\subset L}$
the 
binary relation on $\Sigma^*$ defined by
\begin{align*}
  act_{K\subset L}
  :=\Big\{ (w,\bar w)\in\Sigma^*\times\Sigma^*&:w,\bar w\in K \\
  &\quad \Rightarrow \quad
  \nexists\sigma\in\Sigma \st
  [w\sigma\in K, \bar w\sigma\in L\setminus K] \text{ or }
  [w\sigma\in L\setminus K, \bar w\sigma\in K]
  \Big\}.
\end{align*}
The key control design problem in DESs is to design a ``supervisor''
that modifies the original language $L$ generated by the automaton to
a desired language $K\subset L$, by judiciously disabling controllable
events in $\Sigma_c$ (and their associated transitions) based on the
observations obtained through $P$, in a feedback fashion. The reader
is referred to \cite{Ramadge1989,CassandrasLafortune2008,Wonham_Lec} for more
details on DES models and key results.

%
%

\section{Supervised discrete-event systems under attacks}\label{se:general}

In this paper, we deviate from the classical DES model by introducing
a set of adversaries $\mathcal{A}$ whose goal is to prevent the supervisor
from achieving the desired language $K$.
Each adversary $A\in\mathcal{A}$ can corrupt the string of output symbols
$P(w)$, $w\in\Sigma^*$ in multiple (non-deterministic) ways, e.g., erasing
and/or inserting specific output symbols (see
Example~\ref{ex:security} below). Our goal is thus to design a
supervisor that can guarantee the desired language $K$ regardless of
(1) which $A\in\mathcal{A}$ is the actual attack, and (2) how the actual
attack $A\in\mathcal{A}$ corrupts each string of output symbols $P(w)$,
$w\in\Sigma^*$.


\medskip

Formally, each adversary is a set-valued map
$A:\Delta^*\to2^{\Delta^*}$ that assigns to each string of output symbols
$P(w)$, $w\in\Sigma^*$ a set $A\big(P(w)\big)$ of (possibly distorted)
strings of symbols that the adversary can send to the supervisor,
instead of the original string $P(w)$. A set-valued map is convenient
because it enable us to consider multiple ways by which the attack $A$
may corrupt a particular string $P(w)$. We call the map $A$ an
\emph{observation attack} and the map $A P:\Sigma^*\to2^{\Delta^*}$ obtained from
the composition $A P := A\circ P$ the corresponding \emph{attacked
  observation map}. The attack map
$A_{\rm id}:\Delta^*\to 2^{\Delta^*}$ that assigns to each string
$y\in\Delta^*$ the set $\{y\}$ containing only the original output string
$y$ can be viewed as the absence of an attack.

\medskip



A \emph{$P$-supervisor for a language $L \subset \Sigma^*$ and an
  attack set $\mathcal{A}$} is a function
$f:\bigcup_{A \in \mathcal{A}}AP(L)\to 2^\Sigma$. The supervisor is \emph{valid
  (for the set of uncontrollable events $\Sigma_{uc}$)} if
\begin{align*}
  f(y) \in \Gamma, \quad \forall y \in \bigcup_{A \in \mathcal{A}}AP(L),
\end{align*}
where
$\Gamma := \{ \gamma \subset \Sigma: \Sigma_{uc} \subset \gamma\}$ is the set of all subsets of
$\Sigma$ that contain all the uncontrollable symbols in $\Sigma_{uc}$. One should
view $f(y)$ as the set of events that the supervisor enables after
observing the (potentially distorted) string $y\in\Delta^*$. A valid
supervisor is forced to always enable the uncontrolled symbols in
$\Sigma_{uc}$.

\medskip

By disabling symbols, a supervisor $f$ effectively only ``accepts'' a
subset of the strings in the original language $L$. However, which
symbols are accepted depends on how the adversary actually distorts
the observation strings. For a given supervisor $f$, the \emph{maximal
  language $L_{f,A}^{\max}$ controlled by $f$ under the attack
  $A \in \mathcal{A}$} is defined inductively by
\begin{align*}
  \begin{cases}
    \epsilon\in L_{f,A}^{\max}\\
    w\sigma\in L_{f,A}^{\max}
    \quad \Leftrightarrow \quad
    w\in L_{f,A}^{\max},\quad
    w\sigma\in L,
    \quad \exists y\in A P(w)
    \st \sigma\in f(y),
  \end{cases}
\end{align*}
whereas the \emph{minimal language $L_{f,A}^{\min}$ (
  $\subset L_{f,A}^{\max}$ ) controlled by $f$ under the attack
  $A\in \mathcal{A}$} is defined inductively by
\begin{align*}
  \begin{cases}
    \epsilon\in L_{f,A}^{\min}\\
    w\sigma\in L_{f,A}^{\min}
    \quad \Leftrightarrow \quad
    w\in L_{f,A}^{\min},\quad w\sigma\in L,\quad
    \forall y\in A P(w),\; \sigma\in f(y).
  \end{cases}
\end{align*}
By construction $L_{f,A}^{\max}$ and $L_{f,A}^{\min}$ are prefix
closed.  One can conclude from these definitions that an adversary
$A\in\mathcal{A}$ can distort the observations so that every string in the
maximal language $L_{f,A}^{\max}$ is accepted by the supervisor $f$,
but no string outside this language will be accepted. The same
adversary is also able to cause the rejection of every string outside
the minimal language $L_{f,A}^{\min}$, but is unable to cause the
rejection of any string inside this language.

\medskip

In the absence of an attack (i.e., when $A=A_{\rm id}$) the maximal
and minimal languages coincide and correspond to the classical notion
of the \emph{language $L_f$ generated by the supervised
  system}. However, under attacks we can only be sure that
$L_{f,A}^{\min}\subset L_{f,A}^{\max}$, typically with a strict
inclusion. Motivated by the desire to construct supervisors that
guarantee a specific language $K\subset L$, we say that $f$ is a
\emph{solution to the supervision of $K\subset L$ under the attack set
  $\mathcal{A}$} if $f$ is valid and
\begin{align*}
  L_{f,A}^{\min}=L_{f,A}^{\max}=K, \quad \forall A\in\mathcal{A}.
\end{align*}
This means that such $f$ is a Stackelberg equilibrium policy for the
supervisor (which we regard as the leader) and guarantees victory
because, even if the supervisor advertises $f$ as its policy, the
adversary cannot force the rejection of any string in $K$ (because
$L_{f,A}^{\min}=K$) and cannot force  the acceptance of any string
outside $K$ (because $L_{f,A}^{\max}=K$).

\begin{example}[Multi-layer cyber attack to a computer
  system]\label{ex:security}
  \begin{figure}[ht]
    \centering
    \includegraphics[width=.7\textwidth]{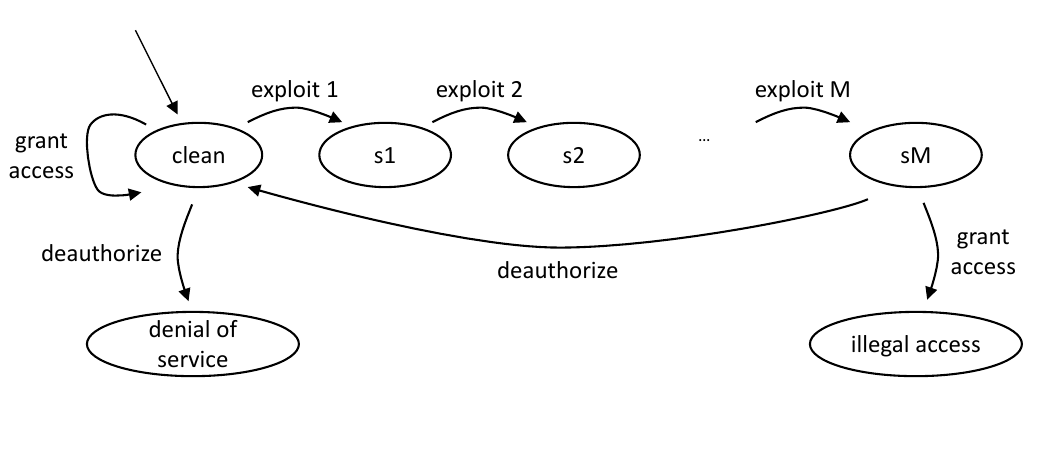}
    \caption{Automaton representation of a multi-layered cyber attack
      to a computer system. }
    \label{fig:multi-layer}
  \end{figure}
  Figure~\ref{fig:multi-layer} shows an automaton representation of a
  multi-layer cyber attack to a computer system. Each automaton's
  state represents the interaction between a specific user and the
  system. This interaction starts in a ``clean'' state where
  legitimate users can request access to a secure service by providing
  appropriate credentials. However, through a sequence of exploits, a
  cyber attacker may get access to credentials that would give her
  access to the secure service. The goal is to design a supervisor
  that grants access to the service to legitimate users but not to the
  cyber attacker that made use of the exploits to obtain the needed
  credentials. Figure~\ref{fig:multi-layer} shows a very simple
  sequential set of $M$ exploits, but realistic scenarios are
  characterized by much more complicate networks of exploits.

  \medskip

  In the example in Figure~\ref{fig:multi-layer}, the set of events is
  \begin{align*}
    \Sigma=\Big\{\text{grant acess},\; \text{deauthorize},\; \text{exploit 1},
    \text{exploit 2}, \dots,\text{exploit $M$} \Big\},
  \end{align*}
  where the ``exploits'' are uncontrolled events, whereas ``grant
  access'' and ``deauthorize'' as controlled events:
  \begin{align*}
    \Sigma_c&=\Big\{\text{grant acess},\; \text{deauthorize}\Big\}, &
    \Sigma_{uc}&=\Big\{\text{exploit 1},\;
    \text{exploit 2},\; \dots,\text{exploit $M$} \Big\}.
  \end{align*}
  We denote by $L\subset\Sigma^*$ the language $G$ represented by the automaton
  described above and by $K_{\rm safe}\subset L$ the language that excludes
  from $L$ all strings that include transitions to the ``bad states''
  labeled with ``denial of service'' and ``illegal access.''
  Specifically, $K_{\rm safe}$ does not include any string that starts
  with ``deauthorize'' nor any string that includes the sequences
  ``grant access, deauthorize''; ``deauthorize, deauthorize''; or
  ``exploit M, grant access.''

  \medskip

  While the ``exploits'' are uncontrolled events, we assume that the
  system has been instrumented with security logs that allow us to
  observe those transitions, which corresponds to the following
  observation map
  \begin{align}\label{eq:full-information}
    \Delta&=\Sigma, & P(w)&=w, \quad \forall w\in\Sigma^*.
  \end{align}
  In the absence of attacks (i.e., with $\mathcal{A}=\{A_{\rm id}\}$), the
  supervisor
  \begin{align*}
    f(w)&=
    \begin{cases}
      \Sigma\setminus  \text{``grant access''} & w\in \Sigma^*\, \Sigma_{uc}^+,\\
      \Sigma\setminus  \text{``deauthorize''} & \text{otherwise},
    \end{cases}
    &\Sigma_{uc}^+:=\Sigma_{uc}^*\setminus\{\epsilon\},
  \end{align*}
  disables ``grant access'' after strings that end with an exploit
  event and results in maximal and minimal languages that are both
  equal to the desired language: $L_{f,A}^{\max}=L_{f,A}^{\min}=K_{\rm safe}$.

  \medskip

  We are interested in situations where the security systems that
  generate the logs may have been compromised and create false and/or
  remove legitimate observations. For simplicity, suppose that the
  observations corresponding to each of the $M$ exploits are generated
  by $M$ security systems and we suspect that one of them may have
  been compromised. We can model this scenario by considering the
  following set of $M+1$ attacks:
  \begin{align}\label{eq:log-attacks}
    \mathcal{A}_{\rm log-attacks}:=\big\{A_{\rm id},A_\text{exploit 1},A_\text{exploit
      2},\dots,A_\text{exploit M} \big\},
  \end{align}
  where $A_{\rm id}$ corresponds to all security systems being good
  and each $A_\text{exploit\; i}$ corresponds to an attacker having
  compromised the security system that logs the occurrence of
  ``exploit i'', which may therefore arbitrarily insert/remove the
  output symbols ``exploit i'' into/from the (correct) observation
  string $P(w)$, $w\in\Delta^*$. The key question addressed in this paper is
  whether or not there exists a solution to the supervision of the
  language $K_{\rm safe}$ under this more interesting attack set
  $\mathcal{A}_{\rm log-attacks}$. We shall see later that this is
  possible for $M\ge 3$, but not possible for $M\in\{1,2\}$.

  \medskip

  The paper \cite{SheynerWing04} describes examples of the type of
  multi-layer attack considered in Example~\ref{ex:security}. One of
  these examples considers a scenario where an attacker wants to gain
  access to a database that requires root access at a particular Linux
  server. The server is behind a firewall and gaining access to it
  requires a sequence of exploits that include (1) exploiting a buffer
  overflow vulnerability in the Microsoft IIS Web Server to gain
  administrative privileges inside the firewall; (2) scanning the
  local network ports taking advantage of a misconfigured access
  control list in a Squid web proxy; (3) through this scan discovering
  and exploiting a vulnerability in a URL parsing function of the LICQ
  software to execute arbitrary code on a client's computer; and
  finally (4) letting a user without administrative privileges gain them
  illegitimately through a local buffer overflow. This example would
  correspond to a simple sequence of $M=4$ exploits in
  Figure~\ref{ex:security}, but \cite{SheynerWing04} discusses several
  alternative options to achieve the same goal by exploring these and
  other vulnerabilities. In fact, the examples considered in
  \cite[Section 4]{SheynerWing04} correspond to a network of exploits
  with over 300 nodes and many more transitions. The examples
  in~\cite{SheynerWing04} are quite specific in mapping exploits to
  known system vulnerabilities reported in the CVE database
  \cite{CVElist}. However, in practice, many system vulnerabilities
  are not known at the time cyber defense systems are designed so the
  type of analysis proposed in this paper should be based on automata
  that include exploits corresponding to \emph{unknown
    software/hardware vulnerabilities.}  E.g., one may include
  exploits like privilege escalation from user to administrative, even
  if no specific vulnerability to accomplish this is known for the
  system under consideration. Note that a system like
  \cite{ji2017rain} will be able to log the privilege escalation by a
  running process, even if the specific vulnerability that
  accomplishes this is unknown. The results in this paper seek to
  establish algorithms to process logs (like the ones generated by
  systems like \cite{ji2017rain}) that are robust to manipulation by
  an adversary.  \hspace*{\fill} $\Box$
\end{example}

\begin{remark}[Reduction to a supervisory control problem without
  attacks]\label{re:expansion}
  It is often possible to reduce the problem of finding a solution to
  the supervision of a language $K$ under an attack set $\mathcal{A}$ to a
  classical supervisory control problem without attacks. However, this
  is achieved by increasing the complexity of the original language
  $L$ and the automaton $G$ that generates it, which is problematic
  from a computational perspective.

  \medskip

  Such a reduction is easy to illustrate for the problem described in
  Example~\ref{ex:security}, which could also be modeled by a DES
  without attacks by expanding the state-space of the automaton
  $G$. To do this, we would replicate $M$ times the state of the
  automaton in Figure~\ref{fig:multi-layer}, each replicate
  corresponding to one output-symbol attack $A_\text{exploit\; i}$,
  $i\in\{1,2,\dots,m\}$; and add a new initial state with unobservable
  transitions to the ``clean'' state of every replicate. These
  transitions from the new initial state define which attack
  $A\in\mathcal{A}_{\rm log-attacks}$ will actually take place, but are not
  visible to the supervisor. The transitions within each one of the
  $M$ replicates also need to be adjusted: taking, e.g., the replicate
  corresponding to the symbol ``exploit 1'', to emulate the insertion
  of this symbol we would add self-transitions at every state with the
  event ``exploit 1'', and to emulate the removal of this symbol, we
  would associate the original transition from ``clean'' to ``s1''
  with a symbol that is not observable. This construction results in a
  new automaton $G_{\rm expanded}$ and an associated output map
  $P_{\rm expanded}$ with the property that any of its output strings
  $P_{\rm expanded}(w)$, $w\in L(G_{\rm expanded})$ could have been
  generated by the original automaton, under one of the attacks in
  $\mathcal{A}_{\rm log-attacks}$. However, $G_{\rm expanded}$
  has 
  $O(M^2)$
  states and
  transitions, as opposed to the
  $O(M)$ states and transitions in the original automaton $G$.

  \medskip

  The reduction of supervisory control problem under attacks to a
  classical problem without attacks may be more complicated than what
  we have seen in this example. However, it will generally require
  some form of state replication of the original automaton $|\mathcal{A}|$
  times, one for each possible attack map in the set $\mathcal{A}$, with
  initial unobservable transitions to each replicate. In addition, it
  will generally require the addition of further transitions within
  each replicate to model each of the attacks in $\mathcal{A}$.  \hspace*{\fill} $\Box$
\end{remark}

\section{Existence of supervisor achieving specifications}\label{se:existence}

A necessary and sufficient condition for the existence of a solution
to the supervision of $K$ under $\mathcal{A}$ can be expressed in terms of
a notion of observability that extends the conventional one presented
in Section~\ref{se:background}.  Given an attack set $\mathcal{A}$, we
say that a prefix-closed language $K\subset L$ is \emph{$P$-observable for
  the attack set $\mathcal{A}$} if
\begin{align}\label{eq:act-K-attacks-sets}
  R_{A,\bar A}  \subset act_{K\subset L},\qquad \forall A,\bar A\in\mathcal{A},
\end{align}
where the relation $R_{A,\bar A}$ contains all pairs of strings that may
result in attacked observation maps $AP$ and $\bar AP$ with a common
string of output symbols, i.e.,
\begin{align*}
  R_{A,\bar A}:= \big\{(w,\bar w)\in\Sigma^*\times\Sigma^*: A P(w)\cap \bar AP(\bar w)\neq \emptyset\big\}.
\end{align*}
The $P$-observability condition \eqref{eq:act-K-attacks-sets} can be equivalently
restated as requiring that,
$\forall w,\bar w\in K$,
\begin{align}
  \exists A,\bar A &\in \mathcal{A} \st
  AP(w) \cap \bar AP(\bar w) \neq \emptyset \notag \\
  &\quad
  \Rightarrow \quad
  \nexists
  \sigma \in \Sigma \st
  [w\sigma \in K,~
  \bar w\sigma \in L \setminus K] \text{~~or~~}
  [w\sigma \in L \setminus K,~
  \bar w\sigma \in K].
  \label{eq:observability_nonexist}
\end{align}
or equivalently
\begin{align}\label{eq:act-K-attacks}
  \exists A,\bar A \st A P(w)&\cap \bar AP(\bar w)\neq \emptyset \notag \\
  &\quad \Rightarrow \quad
  \forall\sigma\in\Sigma :
  w\sigma\notin L \text{ or }
  \bar w\sigma\notin L \text{ or }
  w\sigma,\bar w\sigma\in K \text{ or }
  w\sigma,\bar w\sigma\in L\setminus K.
\end{align}
In words, observability means that we cannot find two attacks
$A, \bar A \in \mathcal{A}$ that would result in the same observation
$y\in\Delta^*$ for two strings $w,\bar w \in K$ such that
$(w,w′) \notin act_{K\subset L}$, i.e., two $w, \bar w\in K$ such that one will
transition to an element in $K$ and the other to an element outside
$K$, by the concatenation of the same symbol $\sigma\in\Sigma$.

\medskip

The following theorem is the main result of this section. When the set
of attacks $\mathcal{A}$ contains only $A_{\rm id}$, it specializes to
the classical result of \cite{Ramadge1989}.
\begin{theorem}\label{th:observability-attacks}
  For every nonempty prefix-closed set $K\subset L$ and every attack set
  $\mathcal{A}$:
  \begin{enumerate}
  \item \label{en:observability-attacks}
    there exists a solution $f$ to the supervision of $K$ under the
    attack set $\mathcal{A}$     
    if and only if $K$ is controllable and $P$-observable for
    $\mathcal{A}$;

  \item \label{en:supervisor-attacks} if $K$ is controllable and
    $P$-observable for $\mathcal{A}$, the map
    $f:\bigcup_{A \in \mathcal{A}}AP(L)\to \Sigma$ defined by
    \begin{align}
      \label{eq:CC_supervisor_attack}
      f(y) := \Sigma_{uc} \cup
      \Big\{
      \sigma \in \Sigma_c:
      \exists w \in K, A \in \mathcal{A} \st
      [y \in AP(w),~
      w\sigma \in K]
      \Big\},\quad
      \forall y \in \bigcup_{A \in \mathcal{A}}AP(L).
    \end{align}
    is a solution to the supervision of $K$ under the attack set
    $\mathcal{A}$. \hspace*{\fill} $\Box$
  \end{enumerate}
\end{theorem}

Theorem~\ref{th:observability-attacks} enable us to conclude that if
$K$ is controllable and $P$-observable for $\mathcal{A}$ then there
exists a Stackelberg equilibrium policy for the leader (supervisor)
that guarantees victory and Theorem~\ref{th:observability-attacks}
provides one such policy in
\eqref{eq:CC_supervisor_attack}. Conversely, if $K$ is not
controllable or not $P$-observable for $\mathcal{A}$, then there
exists no policy for the supervisory that can guarantee victory for an
adversary that knows the policy of the supervisory (in a Stackelberg
sense) and has full information about the game. However, this result
does not provide a solution to scenarios in which $K$ is not
controllable or not $P$-observable for $\mathcal{A}$, but the
supervisor does not advertise its policy or the adversary does not
have full information about the game. We conjecture that in such
scenarios, equilibrium policies for the supervisor will be stochastic
(either in a Nash or Stackelberg sense) and the value of the game will
be a probability of victory in the (open) set $(0,1)$.

\begin{remark}
  In the proof of Theorem~\ref{th:observability-attacks}, we will see
  that to conclude that $K$ is controllable, it suffices that
  $L_{f,A}^{\max}=K$ or $L_{f,A}^{\min}=K$ for some supervisory $f$
  and attack $A\in\mathcal{A}$. \hspace*{\fill} $\Box$
\end{remark}

As in the classic supervisory control problem, the fact
  that controllability of $K$ is a necessary condition for the
  existence of a supervisor is to be expected since, if
  $K \Sigma_{uc} \cap L$ were not a subset of $K$, there would exist a
  string $\bar w\in K \Sigma_{uc}\cap L$, $\bar w\notin K$ that should not be
  accepted (because $\bar w\notin K$) and yet $\bar w$ is the extension of
  a string that should be accepted (because it belongs to $K$)
  followed by an uncontrolled event in $\Sigma_{uc}$.  Once we properly
  parse the meaning of $P$-observability for an attack set $\mathcal{A}$,
  the necessity of this condition is easy to understand: As noted
  above, when this condition does not hold, it is possible to find two
  attacks $A, \bar A \in \mathcal{A}$ that would result in the same
  observation $y\in\Delta^*$ for two distinct strings
  $w, \bar w\in K$, and yet $w$ would transition to an element in $K$
  and $\bar w$ to an element outside $K$. The existence of such
  strings would mean that, upon observing $y$, the supervisor could
  not decide whether or not to enable the next transition. It turns
  out that these two conditions are also sufficient to guarantee that
  we can construct a $P$-supervisor $f$ for that guarantees that
  $L_{f,A}^{\min}=L_{f,A}^{\max}=K$ for all $A\in\mathcal{A}$ and
  Theorem~\ref{th:observability-attacks} provides such a
  supervisory. The following alternative characterization of
  observability under attacks will be used in the sufficiency proof.

\begin{proposition}
  \label{prop:Observability_equivalence}
  Suppose that prefix-closed $K \subset L$ is controllable.
  Then $K$ is $P$-observable for the set of attacks
  $\mathcal{A}$ if and only if
  for all $w, \bar w \in K$, $\sigma \in \Sigma_c$, $A,\bar A \in \mathcal{A}$,
  the following statement holds:
  \begin{align}
    \label{eq:observability_remove_cont}
    [AP(w) \cap \bar AP(\bar w) \neq \emptyset,\quad
    w\sigma \in K,\quad \bar w\sigma \in L]
    \quad \Rightarrow \quad \bar w\sigma \in K.
  \end{align}
\hspace*{\fill} $\Box$
\end{proposition}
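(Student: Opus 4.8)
The plan is to prove the two directions of the equivalence separately, exploiting the structural definition of $\act_{K \subset L}$ together with the controllability hypothesis. Throughout I fix arbitrary $w, w' \in K$ and $A, A' \in \mathcal{A}$ with $AP(w) \cap A'P(w') \neq \emptyset$, since the condition \eqref{eq:observability_remove_cont} and the definition of $P$-observability under attacks are both quantified over exactly these objects.

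For the forward direction, I assume $K$ is $P$-observable under $\mathcal{A}$ and verify \eqref{eq:observability_remove_cont}. Suppose $w\sigma \in K$ and $w'\sigma \in L$ for some $\sigma \in \Sigma_c$. Observability gives $(w,w') \in \act_{K \subset L}$, which by definition forbids the configuration $[w\sigma \in K,\ w'\sigma \in L \setminus K]$. Since $w\sigma \in K$ already holds, the only way to avoid this forbidden configuration is to have $w'\sigma \notin L \setminus K$; combined with $w'\sigma \in L$ this forces $w'\sigma \in K$, which is exactly the desired conclusion.

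For the reverse direction, I assume \eqref{eq:observability_remove_cont} holds and must show $(w,w') \in \act_{K\subset L}$, i.e. that for no $\sigma \in \Sigma$ do we have $[w\sigma \in K,\ w'\sigma \in L \setminus K]$ or the symmetric statement. I split on whether $\sigma \in \Sigma_c$ or $\sigma \in \Sigma_u$. The controlled case $\sigma \in \Sigma_c$ follows from \eqref{eq:observability_remove_cont} directly: if $w\sigma \in K$ and $w'\sigma \in L \setminus K$ held, then applying the hypothesis (with the roles as written) would yield $w'\sigma \in K$, a contradiction; the symmetric forbidden case is handled by swapping the roles of $(w,A)$ and $(w',A')$, which is legitimate because $AP(w) \cap A'P(w') \neq \emptyset$ is symmetric in the two pairs. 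The main obstacle is the uncontrolled case $\sigma \in \Sigma_u$, which is where controllability must do the work, since the hypothesis \eqref{eq:observability_remove_cont} only quantifies over $\sigma \in \Sigma_c$.

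I expect to dispatch the uncontrolled case as follows. If $\sigma \in \Sigma_u$ and $w\sigma \in K$, then since $w' \in K$ and $\bar K = K$, controllability of $K$ (namely $K\Sigma_u \cap L \subset K$) applies to $w'\sigma$: whenever $w'\sigma \in L$ we get $w'\sigma \in K$, so the forbidden configuration $[w\sigma \in K,\ w'\sigma \in L \setminus K]$ simply cannot occur for an uncontrolled $\sigma$, and the symmetric configuration is ruled out identically. Hence for every $\sigma \in \Sigma$ neither forbidden configuration arises, giving $(w,w') \in \act_{K\subset L}$ and completing the proof that $R_{A,A'} \subset \act_{K\subset L}$ for all $A, A' \in \mathcal{A}$. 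The delicate point to state carefully is the symmetry argument, ensuring that both orderings of the two strings are covered by a single invocation of the (symmetric) premise on the attacked observations.
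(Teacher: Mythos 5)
Your proposal is correct and follows essentially the same route as the paper's proof: the forward direction reads off \eqref{eq:observability_remove_cont} from the definition of $\act_{K\subset L}$, and the reverse direction splits on $\sigma \in \Sigma_c$ versus $\sigma \in \Sigma_u$, using controllability to rule out the forbidden configurations for uncontrollable events and the hypothesis (applied once in each order, justified by the symmetry of $AP(w) \cap A'P(w') \neq \emptyset$) for controllable ones. Your explicit remark that the uncontrolled case rests on controllability alone is in fact slightly cleaner than the paper's wording, which nominally attributes that step to \eqref{eq:observability_remove_cont}, but the argument is the same.
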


\noindent\hspace{1em}{\itshape {\it Proof of Proposition \ref{prop:Observability_equivalence}:} }
  To prove this result, we use the necessary and sufficient condition
  \eqref{eq:observability_nonexist} for observability.

  \medskip

  ($\Rightarrow$) Suppose that $K$ is $P$-observable for
  $\mathcal{A}$, and suppose that $w, \bar w \in K$,
  $\sigma \in \Sigma_c$, $A,\bar A \in \mathcal{A}$ satisfy
  $AP(w) \cap \bar AP(\bar w) \neq \emptyset$, $w\sigma \in K$, and
  $\bar w\sigma \in L$.  Then \eqref{eq:observability_nonexist} leads to
  $\bar w\sigma \in K$.

  \medskip

  ($\Leftarrow$) Suppose that \eqref{eq:observability_remove_cont} holds for
  all $w, \bar w \in K$, $\sigma \in \Sigma_c$, $A,\bar A \in \mathcal{A}$.  Let
  $w, \bar w \in K$ satisfy $AP(w) \cap \bar AP(\bar w) \neq \emptyset$ for some
  $A,\bar A \in \mathcal{A}$.  Since $K$ is controllable, if
  $\sigma \in \Sigma_{uc}$ and $w\sigma, \bar w\sigma \in L$, then
  $w\sigma, \bar w\sigma \in K$.  Moreover, we see from
  \eqref{eq:observability_remove_cont} that, for all
  $\sigma \in \Sigma_c$, if $w\sigma \in K$ and $\bar w\sigma \in L$, then
  $\bar w\sigma \in K$.  Exchanging $w$ and $\bar w$, we also have if
  $\bar w\sigma \in K$ and $w\sigma \in L$, then $w\sigma \in K$.  Thus there does not exist
  $\sigma \in \Sigma$ such that
  [$w\sigma \in K,~ \bar w\sigma \in L \setminus K$] or
  [$w\sigma \in L \setminus K,~ \bar w\sigma \in K$].\hspace*{\fill} $\clubsuit$

\noindent\hspace{1em}{\itshape {\it Proof of Theorem \ref{th:observability-attacks}:} }
  To prove necessity in item~\ref{en:observability-attacks}, assume
  that there exists a valid $P$-supervisor $f$ such that
  $L_{f,A}^{\min}=L_{f,A}^{\max}=K$ for all $A\in\mathcal{A}$.  To prove
  that $K$ is controllable, we (only) use the fact that
  $K=L_{f,A}^{\min}$ and pick some word
  $\bar w \in K\Sigma_{uc}\cap L$.  Such a word must be of the form
  $\bar w=w\sigma\in L$ such that $w\in K=L_{f,A}^{\min}$ for all
  $A\in\mathcal{A}$ and $\sigma\in\Sigma_{uc}$. But then,
  \begin{align*}
    w\in L_{f,A}^{\min}, \quad w\sigma\in L, \quad \forall y\in A P(w),\; \sigma \in\Sigma_{uc}\subset f(y).
  \end{align*}
  The definition of $L_{f,A}^{\min}$ thus allows us to conclude that
  $\bar w=w\sigma\in L_{f,A}^{\min}=K$ for all
  $A\in\mathcal{A}$, which shows that
  $K\Sigma_{uc}\cap L\subset K$ and therefore $K$ is controllable.

  \medskip

  To prove that $K$ is $P$-observable, we use the fact that
  $K=L_{f,A}^{\min}=L_{f,A}^{\max}$, $\forall A\in\mathcal{A}$ and pick a
  pair of words $w,\bar w\in K$ such that
  \begin{align*}
    \exists A,\bar A \st A P(w)\cap \bar AP(\bar w)\neq \emptyset,
  \end{align*}
  and an arbitrary symbol $\sigma\in\Sigma$ such that
  $w\sigma,\bar w\sigma\in L$. If $w\sigma\in K=L_{f,A}^{\min}$,
  then by the definition of
  $L_{f,A}^{\min}$, we must have
  \begin{align*}
    w\in L_{f,A}^{\min}=K,\quad w\sigma\in L,\quad \forall y\in A P(w),\; \sigma\in g(y).
  \end{align*}
  Since $A P(w)\cap \bar AP(\bar w)\neq \emptyset$, we must then have
  \begin{align*}
    \bar w\in L_{f,\bar A}^{\max}=K,\quad \bar w\sigma\in L,\quad \exists y\in \bar A P(\bar w) \st \sigma\in f(y)
  \end{align*}
  and consequently $\bar w\sigma\in L_{f,\bar A}^{\max}=K$.  Alternatively, if $w\sigma\in L\setminus
  K$, then $w\sigma\notin K=L_{f,A}^{\max}$ and we must have
  \begin{align*}
    w\in L_{f,A}^{\max}=K,\quad w\sigma\in L,\quad \forall y\in A P(w),\; \sigma\notin f(y).
  \end{align*}
  Since $A P(w)\cap \bar AP(\bar w)\neq \emptyset$, we must then have
  \begin{align*}
    \bar w\in L_{f,\bar A}^{\min}=K,\quad \bar w\sigma\in L,\quad \exists y\in \bar A P(\bar w) \st \sigma\notin f(y).
  \end{align*}
  and consequently $\bar w\sigma\notin L_{f,\bar A}^{\min}=K$.  This shows that
  \eqref{eq:act-K-attacks} holds and therefore $K$ is $P$-observable
  for $\mathcal{A}$.

  \medskip

  To prove sufficiency in item \ref{en:observability-attacks} (and
  also the statement in item~\ref{en:supervisor-attacks}),
  pick the supervisor according to \eqref{eq:CC_supervisor_attack}.
  We prove by induction on the word length that the supervisor $f$ so
  defined satisfies $K=L_{f, A}^{\max}=L_{f,A}^{\min}$ for all
  $ A\in\mathcal{A}$. The basis of induction is the empty string
  $\epsilon$ that belongs to $L_{f,A}^{\max}$ because of the definition of
  this set and belongs to $K$ because this set is prefix-closed.

  \medskip

  Suppose now that $K$, $L_{f,A}^{\max}$, and $L_{f,A}^{\min}$ have
  exactly the same words of length $n\ge 0$, and pick a word
  $\bar w \sigma \in L_{f,A}^{\max}$ of length $n+1$. Since
  $\bar w\in L_{f,A}^{\max}$ has length $n$, we know by the induction
  hypothesis that $\bar w\in K$. On the other hand, since
  $\bar w \sigma\in L_{f,A}^{\max}$, we must have
  \begin{align*}
    \bar w\in L_{f,A}^{\max},\quad \bar w\sigma\in L,\quad \exists y\in A P(\bar w) \st \sigma\in f(y).
  \end{align*}
  If $\sigma \in \Sigma_{uc}$, then we see from controllability that
  $\bar w\sigma \in K$.  Let us next consider the case
  $\sigma \in \Sigma_c$.  By the definition of $f$, $\sigma\in f(y)$ must mean that
  \begin{align*}
    \exists w\in K,\bar A \in\mathcal{A}
    \quad
    \st
    \quad
    [y\in \bar AP(w),~w\sigma \in K].
  \end{align*}
  We therefore have
  \begin{align*}
    w,\bar w \in K,\quad
    AP(\bar w) \cap \bar AP(w) \neq \emptyset,\quad
    w\sigma \in K,\quad \bar w\sigma \in L.
  \end{align*}
  Proposition \ref{prop:Observability_equivalence} therefore shows
  that $\bar w\sigma \in K$.  This shows that any word of length $n+1$ in
  $L_{f,A}^{\max}$ also belongs to $K$. Since
  $L_{f,A}^{\min}\subset L_{f,A}^{\max}$, it follows that any word of length
  $n+1$ in $L_{f,A}^{\min}$ also belongs to $K$.


  \medskip

  Conversely, pick a word $\bar w \sigma \in K\subset L$ of length $n+1$. Since
  $K$ is prefix closed, $\bar w\in K$, and by the induction hypothesis
  that $\bar w\in L_{f,A}^{\min}$. To obtain
  $\bar w\sigma \in L_{f,A}^{\min}$, we need to show that
  \begin{align*}
    \bar w\in L_{f,A}^{\min},\quad \bar w\sigma\in L,\quad \forall y\in A P(\bar w),\; \sigma\in f(y).
  \end{align*}
  The first statement is a consequence of the induction hypothesis (as
  discussed above). The second statement is a consequence of the fact
  that $\bar w \sigma \in K\subset L$. As regards the third statement, if
  $\sigma \in \Sigma_{uc}$, then $\sigma \in f(y)$ for all
  $y\in A P(\bar w)$ by definition.  In order to show that if
  $\sigma \in \Sigma_{c}$, then $\sigma\in f(y)$ for all
  $y\in A P(\bar w)$, we need to prove that
  \begin{align}\label{eq:P-P-attacks}
    \forall y\in A P(\bar w),\quad
    \exists w\in K,~\bar A\in\mathcal{A} \st
    [y\in \bar AP(w),~w\sigma\in K].
  \end{align}
  This holds for the particular case $\bar A=A$, $w=\bar w \in K$.
  Thus any word of length $n+1$ in $K$ also belongs to
  $L_{f,A}^{\min}$ and hence also to
  $L_{f,A}^{\max}\supset L_{f,A}^{\min}$, which completes the induction
  step. 
  \hspace*{\fill} $\clubsuit$

\begin{remark}
  Similarly, we can show that 
  \begin{align}
    \label{eq:b_f_def}
    \bar f(y) := \Sigma_{uc} \cup
    \Big\{
    \sigma \in \Sigma_c:
    \forall w \in K, A \in \mathcal{A} \st
    [y \in AP(w),~
    w\sigma \in L]
    ~~
    \Rightarrow
    ~~
    w\sigma \in K
    \Big\},\quad
    \forall y \in \bigcup_{A \in \mathcal{A}}AP(L)
  \end{align}
  is also a solution to the supervision of $K$ under the attack set
  $\mathcal{A}$. The supervisor $f$ in \eqref{eq:CC_supervisor_attack} is
  said to be \emph{permissive}, while $\bar f$ in \eqref{eq:b_f_def}
  is \emph{anti-permissive} \cite{Yoo2002,Ushio2009}. \hspace*{\fill} $\Box$
\end{remark}

\section{Output-symbol attacks}\label{se:output-symbol-attacks}

Given a set of symbols $\alpha\subset\Delta$ in the observation alphabet, one can
define an observation attack $A_\alpha:\Delta^*\to 2^{\Delta^*}$ that maps to each
string $u\in\Delta^*$ the set of all strings $v\in \Delta^*$ that can be obtained
from $u$ by an arbitrary number of insertions or deletions of symbols
in $\alpha$. We say that $A_\alpha$ corresponds to an \emph{attack on the output
  symbols in $\alpha$.} In this context, it is convenient to also define
the corresponding \emph{$\alpha$-removal observation map}
$R_{\neg\alpha}:\Delta\to (\Delta \cup \{\epsilon\})$ by
\begin{align*}
  R_{\neg\alpha}(t) =
  \begin{cases}
    \epsilon & t \in \alpha \\
    t & t \not\in \alpha.
  \end{cases}
\end{align*}
The $\alpha$-removal observation map can be extended to a map defined for
strings of output symbols in the same way as observation maps $P$.
This $\alpha$-removal observation map allows us to define
$A_\alpha:\Delta^*\to 2^{\Delta^*}$ as follows:
\begin{align}\label{eq:A-R}
  A_\alpha(u)=\big\{v\in \Delta^*:R_{\neg\alpha}(u)=R_{\neg\alpha}(v)\big\}.
\end{align}
Note that the absence of attack $A_{\rm id}$ corresponds to an empty
set $\alpha=\emptyset$. This type of attacks are precisely the ones we found in
Example~\ref{ex:security}.

\subsection{Observability test}\label{se:output-symbol-attacks-test}

The next result shows that for attacks on output symbols, one can test
observability under attacks by checking regular observability (without
attacks) for an appropriate set of output maps. This means that the
observability tests developed for the non-attacked case
\cite{CassandrasLafortune2008,Tsitsiklis1989} can be used to determine
observability under output-symbol attacks.
\begin{theorem}\label{th:symbol-attack}
  For every nonempty prefix-closed set $K\subset L$ and attack set
  $\mathcal{A}=\{A_{\alpha_1},A_{\alpha_2},\dots,A_{\alpha_M}\}$ consisting of
  $M\ge 1$ observation attacks, $K$ is $P$-observable for the set of
  attacks $\mathcal{A}$ if and only if $K$ is
  $(R_{\neg\alpha}\circ P)$-observable (in the classical sense, i.e., without
  attacks) for every set $\alpha:= \alpha_i\cup\alpha_j$, $\forall i,j\in\{1,2,\dots,M\}$.
   \hspace*{\fill} $\Box$
\end{theorem}


  In essence, Theorem~\ref{th:symbol-attack} states that to have
  $P$-observability for a set of output-symbol attacks, we need to pick
  every possible pair of two attacks $A_{\alpha_i}$, $A_{\alpha_j}$ and ask
  whether we would have ``classical'' observability if we were to
  remove all symbols affected by the two attacks. This result can be
  counter-intuitive because even though we assume that we only have
  one attack in $\mathcal{A}$, we need to protect consider the effect of
  pairs of attacks. If we know which attack in $\mathcal{A}$ we had to
  face, then it would suffice to erase from the output all symbols
  corresponding to that particular attack. The problem is that we do
  not know which attack we are facing and this force us to have more
  ``redundancy'' in the sense that we need a stronger version of
  observability. In fact, we will see in
  Theorem~\ref{th:supervisor-symbol-attack}, that we can construct a
  supervisor to solve this problem precisely by removing more symbols
  than those an attacker could control and then checking for
  consistency across the decisions made by ``classical'' supervisors
  that operate on reduced sets of symbols.
  
  \medskip

The following result provides the key step needed to prove
Theorem~\ref{th:symbol-attack}.

\begin{lemma}\label{le:R-alpha}
  Given any two sets $\alpha_1,\alpha_2\subset\Delta$ and $\alpha:=\alpha_1\cup\alpha_2$, we have that
  \begin{align}
    \label{eq:z_to_A}
    (v,\bar v)\in\ker R_{\neg\alpha} \quad \Rightarrow \quad
    A_{\alpha_1}(v)\cap A_{\alpha_2}(\bar v)\neq\emptyset,
  \end{align}
  where $\ker R_{\neg \alpha} :=
  \{(v,\bar v) \in \Delta^* \times \Delta^*:
  R_{\neg \alpha}(v) = R_{\neg \alpha}(\bar v)
  \}$.
 \hspace*{\fill} $\Box$
\end{lemma}

\noindent\hspace{1em}{\itshape {\it Proof of Lemma \ref{le:R-alpha}:} }
  To prove this result, we must show that given two words
  $v,\bar v\in\Delta^*$ such that $R_{\neg\alpha}(v)=R_{\neg\alpha}(\bar v)$, there exists a third
  word $y\in\Delta^*$ that belongs both to $A_{\alpha_1}(v)$ and
  $A_{\alpha_2}(\bar v)$, and therefore
  \begin{align}
    \label{eq:R1_R2_eq}
    R_{\neg \alpha_1}(v)=R_{\neg \alpha_1}(y), \quad
    R_{\neg \alpha_2}(\bar v)=R_{\neg \alpha_2}(y).
  \end{align}
  The desired word $y$ can be constructed through the following steps:
  \begin{enumerate}
  \item Start with the word $y_1=R_{\neg \alpha_1}(v)$, which is obtained by
    removing from $v$ all symbols in $\alpha_1$.  Since
    $R_{\neg\alpha}=R_{\neg\alpha_2}\circ R_{\neg\alpha_1}$, we have that
    \begin{align*}
      R_{\neg \alpha_2}(y_1)
      = R_{\neg \alpha_2}(R_{\neg \alpha_1}(v))
      = R_{\neg \alpha}(v)
      = R_{\neg \alpha}(\bar v),
    \end{align*}
    and therefore the words $y_1$ and $\bar v$ still only differ by symbols
    in $\alpha_1$ and $\alpha_2$.

  \item Construct $y_2$ by adding to $y_1$ suitable symbols in
    $\alpha_1$ so that $R_{\neg \alpha_2}(y_2)=R_{\neg \alpha_2}(\bar v)$. This is possible because
    $y_1$ and $\bar v$ only differ by symbols in $\alpha_1$ and
    $\alpha_2$.  To get
    $R_{\neg \alpha_2}(y_2)=R_{\neg \alpha_2}(\bar v)$, we do not care about the symbols in
    $\alpha_2$ so we just have to insert into $y_1$ the symbols in
    $\alpha_1$ that appear in $\bar v$ (at the right locations).

  \item By construction,
    $R_{\neg \alpha_1}(y_2) =R_{\neg \alpha_1}(y_1) =y_1= R_{\neg \alpha_1}(v)$, and hence the original
    $v$ and $y_2$ only differ by symbols in $\alpha_1$. Since
    $R_{\neg\alpha_2}(y_2)=R_{\neg\alpha_2}(\bar v)$, it follows that $\bar v$ and
    $y_2$ only differ by symbols in $\alpha_2$. We therefore conclude that
    $y:= y_2$ indeed satisfies \eqref{eq:R1_R2_eq} and hence belongs to both
    $A_{\alpha_1}(v)$ and $A_{\alpha_2}(\bar v)$. \hspace*{\fill} $\clubsuit$
  \end{enumerate}

\noindent\hspace{1em}{\itshape {\it Proof of Theorem \ref{th:symbol-attack}:} }
  By definition, $K$ is $P$-observable for the set of attacks
  $\mathcal{A}$ if and only if
  \begin{align*}
    R_{A_{\alpha_i},A_{\alpha_j}}\subset act_{K\subset L},\qquad \forall i,j\in\{1,2,\dots,M\}.
  \end{align*}
  Also by definition, $K$ is $(R_{\neg\alpha}\circ P)$-observable for the set
  $\alpha:= \alpha_i\cup\alpha_j$ if and only if
  \begin{align*}
    \ker (R_{\neg\alpha}\circ P) \subset act_{K\subset L}.
  \end{align*}
  To prove the results, it therefore suffices to show that, for every
  $i,j\in\{1,2,\dots,M\}$, we have that
  \begin{align*}
    R_{A_{\alpha_i},A_{\alpha_j}}=\ker (R_{\neg\alpha}\circ P), \qquad \alpha:= \alpha_i\cup\alpha_j.
  \end{align*}
  To show that this equality holds, first pick a pair
  $(w,\bar w)\in R_{A_{\alpha_i},A_{\alpha_j}}$, which means that
  there exists a word $y\in\Delta^*$ that belongs both to
  $A_{\alpha_i} P(w)$ and $A_{\alpha_j}P(\bar w)$, and therefore
  \begin{align*}
    &y\in A_{\alpha_i} P(w) \quad \Leftrightarrow \quad R_{\neg\alpha_i}\big(P(w)\big)=R_{\neg\alpha_i}(y)\\
    &y\in A_{\alpha_j} P(\bar w) \quad \Leftrightarrow \quad R_{\neg\alpha_j}\big(P(\bar w)\big)=R_{\neg\alpha_j}(y)
  \end{align*}
  But then, since $\alpha:= \alpha_i\cup \alpha_j$, we have that $R_{\neg\alpha}=R_{\neg\alpha_j}\circ
  R_{\neg\alpha_i}=R_{\neg\alpha_i}\circ R_{\neg\alpha_j}$, and consequently
  \begin{multline*}
    R_{\neg\alpha}\big(P(w)\big)
    =R_{\neg\alpha_j}\big(R_{\neg\alpha_i}\big(P(w)\big)\big)
    =R_{\neg\alpha_j}\big(R_{\neg\alpha_i}(y)\big)\\
    =R_{\neg\alpha_i}\big(R_{\neg\alpha_j}(y)\big)
    =R_{\neg\alpha_i}\big(R_{\neg\alpha_j}\big(P(\bar w)\big)\big)
    =R_{\neg\alpha}\big(P(\bar w)\big).
  \end{multline*}
  We have thus shown that
  $R_{A_{\alpha_i},A_{\alpha_j}}\subset \ker (R_{\neg\alpha}\circ P)$. To prove the reverse
  inclusion, pick a pair $(w,\bar w)\in \ker (R_{\neg\alpha}\circ P)$, which means that
  \begin{align*}
    R_{\neg\alpha}\big(P(w)\big)=R_{\neg\alpha}\big(P(\bar w)\big),
  \end{align*}
  and therefore $\big(P(w),P(\bar w)\big)\in\ker R_{\neg\alpha}$.
  In conjunction with
  Lemma~\ref{le:R-alpha}, this gives
  \begin{align*}
    A_{\alpha_i}\big(P(w)\big)\cap
    A_{\alpha_j}\big(P(\bar w)\big)\neq \emptyset.
  \end{align*}
  Therefore we have $(w,\bar w)\in R_{A_{\alpha_i},A_{\alpha_j}}$. This shows that
  $\ker (R_{\neg\alpha}\circ P)\subset R_{A_{\alpha_i},A_{\alpha_j}}$, which concludes
  the proof.
 \hspace*{\fill} $\clubsuit$

\begin{example}[Multi-layer cyber attack to a computer system
  (cont.)]\label{ex:security-test}
  In Example~\ref{ex:security} we considered $M$ attacks
  $A_\text{exploit\; i}$, $i\in\{1,\dots,M\}$, each corresponding to an
  adversary having compromised the security system that logs the
  occurrence of ``exploit i'', allowing it to arbitrarily
  insert/remove the output symbols ``exploit i'' into/from the
  observation string. Each of these is an output-symbol attack
  $A_\alpha$, with the set $\alpha$ including a single output symbol ``exploit
  i''. Using the $A_\alpha$-notation introduced above, we can thus write
  the attack set $\mathcal{A}_{\rm log-attacks}$ in \eqref{eq:log-attacks}
  as.
  \begin{align*}
    \mathcal{A}_{\rm log-attacks}:=\big\{A_{ \emptyset },A_{\{\text{exploit 1}\}},A_{\{\text{exploit
        2}\}},\dots,A_{\{\text{exploit M}\}} \big\}.
  \end{align*}
  Theorems~\ref{th:observability-attacks} and~\ref{th:symbol-attack}
  can be used to confirm our previous assertion that there exists a
  solution to the supervision of the language $K_{\rm safe}$ under the
  attack set $\mathcal{A}_{\rm log-attacks}$ if and only if $M\ge 3$. This
  is because $K_{\rm safe}$ is $P$-observable for the attack set
  $\mathcal{A}_{\rm log-attacks}$ if and only if $M\ge 3$:

  \begin{enumerate}
  \item For $M=1$, $K_{\rm safe}$ is not $P$-observable for the attack
    set
    $\mathcal{A}_{\rm log-attacks}:=\big\{A_{ \emptyset },A_{\{\text{exploit
        1}\}}\big\}$ because $K_{\rm safe}$ is not
    $(R_{\neg\{\text{exploit 1}\}}\circ P)$-observable. This is
    straightforward to conclude from Theorem~\ref{th:symbol-attack}
    because if we remove from the output strings all symbols ``exploit
    1'', the supervisor cannot distinguish between the ``clean'' and
    ``s1'' states.
  
  \item For $M=2$, $K_{\rm safe}$ is still not $P$-observable for the
    attack set
    $\mathcal{A}_{\rm log-attacks}:=\big\{A_{ \emptyset },A_{\{\text{exploit
        1}\}},A_{\{\text{exploit 2}\}}\big\}$ because $K_{\rm safe}$
    is not
    $(R_{\neg\{\text{exploit 1},\text{exploit 2}\}}\circ
    P)$-observable. Again, this is straightforward to conclude from
    Theorems~\ref{th:symbol-attack} because if we remove from the
    output strings all symbols ``exploit 1'' and ``exploit 2'', the
    supervisor cannot distinguish between the ``clean'', ``s1'',
    ``s2'' states.

    \medskip

    This result may seem counter-intuitive because none of the
    attacks in $\mathcal{A}_{\rm log-attacks}$ is actually able to
    insert/remove both symbols ``exploit 1'' and ``exploit 2'' and yet
    the test involves a system where we removed both symbols. To
    understand this apparent paradox, suppose that the supervisor
    observes the output string
    \begin{align*}
      \{\text{grant access},\text{exploit 1}\}.
    \end{align*}
    This output string could have resulted from two scenarios:
    \begin{enumerate}
    \item The event sequence is
      \begin{align*}
        \{\text{grant access},\text{exploit 1},\text{exploit 2}\},
      \end{align*}
      but an output-symbol attack $A_{\{\text{exploit 2}\}}$ erased
      the symbol ``exploit 2''.
    \item The event sequence is
      \begin{align*}
        \{\text{grant access}\},
      \end{align*}
      but an output-symbol attack $A_{\{\text{exploit 1}\}}$ inserted
      the symbol ``exploit 1''.
    \end{enumerate}
    \smallskip The existence of these two options is problematic because
    in the former case ``grant access'' must be disabled, whereas in
    latter case it must be enabled.
    
  \item For $M\ge 3$, $K_{\rm safe}$ becomes $P$-observable for the
    attack set
    \begin{align*}
      \mathcal{A}_{\rm log-attacks}:=\big\{A_{ \emptyset },A_{\{\text{exploit
          1}\}},\dots,A_{\{\text{exploit M}\}}\big\}
    \end{align*}
    because $K_{\rm safe}$ is
    $(R_{\neg\{\text{exploit i},\text{exploit i}\}}\circ P)$-observable, for
    every $i,j\in\{1,2,\dots,M\}$. Again, this is straightforward to
    conclude from Theorems~\ref{th:symbol-attack} because even if we
    remove from the output strings all pairs of symbols ``exploit i''
    and ``exploit j'', there will still remain a third symbol
    ``exploit k'' $k\neq i, k\neq j$ in the output string that allows the
    supervisor to deduce a transition out of the ``clean''
    state. \hspace*{\fill} $\Box$
\end{enumerate}
\end{example}

\begin{remark}[complexity]
  As noted in Remark~\ref{re:expansion}, it is often possible to
  reduce the problem of finding a solution to the supervision under an
  attack set $\mathcal{A}$ to a classical supervisory control problem
  without attacks, at the expense of having to expand the state of the
  original automaton, essentially by replicating each state
  $M:=|\mathcal{A}|$ times.
%
%
  For problems in which the complexity of testing observability is
  linear in the number of states of the original automaton, this
  reduction may be computationally more efficient than using the
  result in Theorem~\ref{th:symbol-attack}, which would require
  testing the observability of $\binom{M}{2}=O(M^2)$ systems.

  \medskip

  The simple structure of Example~\ref{ex:security-test} allowed us to
  prove observability for every $M\ge3$ through a formal
  argument. However, more complicated networks of exploits typically
  require the use of computational tests for observability.  If we had
  to perform a computational test for Example~\ref{ex:security-test},
  Theorem~\ref{th:symbol-attack} would require testing the
  observability of $\binom{M}{2}=(M^2+M)/2$ distinct systems, each with
  $O(M)$ states and transitions. For this problem, one can show that
  each of these tests would have a worst-case computational complexity
  $O(M^3)$, using the algorithm in \cite[Section
  3.7.3]{CassandrasLafortune2008}. This would lead to a worst-case
  complexity $O(M^5)$.  Alternatively, the expansion described in
  Remark~\ref{re:expansion} would lead to a single observability test
  for a system with $O(M^2)$ states and transitions. Even with more
  states, the observability test for this system based on the
  algorithm in \cite[Section 3.7.3]{CassandrasLafortune2008} would
  still only have a worst-case complexity $O(M^4)$.
  However, while sometimes attractive to \emph{test observability}, we
  shall see shortly that the expansion described in
  Remark~\ref{re:expansion} generally results in much higher
  worst-case complexity for the \emph{design of the supervisor}.
\hspace*{\fill} $\Box$
\end{remark}

\subsection{Supervisor design}\label{se:output-symbol-attacks-supervisor}

The following result shows that for attacks on output symbols, one can
construct a solution to the supervision of $K\subset L$ under attacks by
appropriately combining a set of classical supervisors, each designed
for an appropriately defined output map without attacks. This allow us
to reuse classical supervisor-design approaches and tools
\cite{FengWonham2006,CassandrasLafortune2008,DESUMA2014}.

\begin{theorem}\label{th:supervisor-symbol-attack}
  For a given nonempty prefix-closed set $K\subset L$ and attack set
  $\mathcal{A}=\{A_{\alpha_1},A_{\alpha_2},\dots,A_{\alpha_M}\}$ consisting of
  $M\ge 1$ observation attacks, assume that $K$ is controllable and
  $P$-observable for $\mathcal{A}$. In view of
  Theorems~\ref{th:observability-attacks} and~\ref{th:symbol-attack},
  this means that, for every $i,j\in\{1,2,\dots,M\}$ there exists a
  valid supervisor $f_{i j}$ that generate the language $K$ for the
  observation map $P_{i j} := R_{\neg(\alpha_i \cup \alpha_j)}\circ P$ (in the classical
  sense, i.e., without attacks). In this case, the following
  supervisor is a solution to the supervision of $K$ under the attack
  set $\mathcal{A}$:
  \begin{align*}
    f(y)&:=\bigcup_{i=1}^M f^i(y), &
    f^i (y) &:=  \bigcap_{j=1}^M \tilde f_{ij}(y), &
    \forall y \in  \bigcup_{i=1}^M A_{\alpha_i}P(L),
  \end{align*}
  where
  \begin{align}\label{eq:tilde-f}
    \tilde f_{ij}(y) :=  
    \begin{cases}
      f_{ij}(R_{\neg(\alpha_i \cup \alpha_j)} (y)) & y \in A_{\alpha_i}P(L) \cup A_{\alpha_j}P(L) \\
      \Sigma_{uc} & y \not\in A_{\alpha_i}P(L) \cup A_{\alpha_j}P(L).
    \end{cases}
  \end{align}
\hspace*{\fill} $\Box$
\end{theorem}  

\begin{remark}
  As we shall see in the proof of
  Theorem~\ref{th:supervisor-symbol-attack}, the lower branch of
  \eqref{eq:tilde-f} can be set equal to any subset of $\Sigma$ (possibly
  $y$-dependent) that contains $\Sigma_{uc}$.  \hspace*{\fill} $\Box$
\end{remark}

\noindent\hspace{1em}{\itshape {\it Proof of Theorem \ref{th:supervisor-symbol-attack}:} }
  Since every $f_{i j}$ is a valid supervisor, all the sets $f_{i j}(y)$
  and $\tilde f_{i j}(y)$ contain $\Sigma_{uc}$ and consequently so does
  $f(y)$, which proves that $f$ is a valid supervisor. To complete the
  proof, it thus suffices to show that
  \begin{align*}
    L_{f,A_{\alpha_k}}^{\min} \supset K \supset  L_{f,A_{\alpha_k}}^{\max}, \quad \forall A_{\alpha_k}\in\mathcal{A}.
  \end{align*}
  First we prove $L_{f,A_{\alpha_k}}^{\min} \supset K$ by showing that every word
  $w$ in $K$ also belongs to $L_{f,A_{\alpha_k}}^{\min}$. We do this prove
  by induction on word length $n$. The basis of induction is trivially
  true because the empty string belongs to $L_{f,A_{\alpha_k}}^{\min}$ by
  construction. Suppose now that we have established that every word
  $w \in K$ of length $n$ also belongs to $L_{f,A_{\alpha_k}}^{\min}$ and
  take an arbitrary word $w\sigma \in K$ of length $n+1$.
  Since $K$ is prefix closed, we have that $w\in K$ and by the induction
  hypothesis that $w\in L_{f,A_{\alpha_k}}^{\min}$.  In view of the definition of
  minimal language, to prove that
  $w\sigma\in L_{f,A_{\alpha_k}}^{\min}$ we need to show that
  $\sigma\in f(y)$, $\forall y\in A_{\alpha_k} P(w)$. To accomplish this, we pick an
  arbitrary string $y\in A_{\alpha_k} P(w)$ and note that, because every
  $f_{k j}$ generates the language $K$ for the observation map
  $P_{k j} := R_{\neg(\alpha_k \cup \alpha_j)}\circ P$, we have that
  \begin{align}
    \label{eq:Lf_iff}
    w\sigma \in K \quad
    \Leftrightarrow \quad
    w \in K,\;w\sigma \in L,\;\sigma \in f_{kj}\big(R_{\neg(\alpha_k \cup \alpha_j)}\big(P(w)\big)\big),
    \quad \forall j\in\{1,\dots,M\}.
  \end{align}
  Since $y\in A_{\alpha_k} P(w)$, we conclude from \eqref{eq:A-R} that
  \begin{align}
    \label{eq:fRcond}
    R_{\neg(\alpha_k \cup \alpha_j)} (y) = R_{\neg(\alpha_k \cup \alpha_j)}\big(P(w)\big).
  \end{align}
  Moreover, we now that $w\sigma \in K$, therefore \eqref{eq:Lf_iff}
  and \eqref{eq:fRcond} imply that
  \begin{align*}
    \sigma \in f_{kj}\big(R_{\neg(\alpha_k \cup \alpha_j)}\big(P(w)\big)\big)
    = f_{kj}\big(R_{\neg(\alpha_k \cup \alpha_j)}(y)\big)
    = \tilde f_{kj}(y),
    \quad \forall j\in\{1,\dots,M\}.
  \end{align*}
  We have thus shown that
  \begin{align*}
    \sigma \in f^k(y) = \bigcap_{j=1}^M \tilde f_{kj}(y)
    \subset f(y).
  \end{align*}
  This confirms that $w\sigma\in L_{f,A_{\alpha_k}}^{\min}$, which completes the
  induction step.

  \medskip

  We show next that $L_{f,A_{\alpha_k}}^{\max} \subset K$, also using an
  induction argument. As before, the basis of induction is trivial so
  to establish the induction step, we suppose that we have established
  that every word $w \in L_{f,A_{\alpha_k}}^{\max}$ of length $n$ also
  belongs to $K$ and take an arbitrary word
  $w\sigma \in L_{f,A_{\alpha_k}}^{\max}$ of length $n+1$. By the induction
  hypothesis, we have that $w\in K$ and to prove the induction step, we
  need to show that $w\sigma\in K$. From the construction of
  $L_{f,A_{\alpha_k}}^{\max}$, we known that
  $w\sigma \in L_{f,A_{\alpha_k}}^{\max}$ implies that
  $w \in L_{f,A_{\alpha_k}}^{\max}$ and
  \begin{align*}
    \exists y \in A_{\alpha_k}P(w)\quad \text{s.t.}\quad
    \sigma \in f(y) := \bigcup_{i=1}^M f^i(y)
  \end{align*}
  and therefore
  \begin{align*}
    \exists y \in A_{\alpha_k}P(w),\; i\in\{1,\dots,M\}\quad \text{s.t.}\quad
    \sigma \in f^i(y):=\bigcap_{j=1}^M \tilde f_{ij}(y)
    \subset \tilde f_{kj}(y)=f_{kj}\big(R_{\neg(\alpha_k \cup \alpha_j)} (y)\big).
  \end{align*}
  Also here \eqref{eq:fRcond} holds and we conclude that
  \begin{align*}
    \sigma\in f_{kj}\big(R_{\neg(\alpha_k \cup \alpha_j)} (y)\big)=f_{kj}\big(R_{\neg(\alpha_k \cup \alpha_j)}\big(P(w)\big)\big).
  \end{align*}
  Again, because $f_{k j}$ generates the language $K$ for the
  observation map $P_{k j} := R_{\neg(\alpha_k \cup \alpha_j)}\circ P$, the equivalence
  in \eqref{eq:Lf_iff} holds. Since we have shown that all the
  conditions in the right-hand side of \eqref{eq:Lf_iff} hold, we
  conclude that $w\sigma \in K$, which completes the proof of the induction
  step. \hspace*{\fill} $\clubsuit$

\begin{example}[Multi-layer cyber attack to a computer system
  (cont.)]\label{ex:security-supervisor}  To construct a
  supervisor for the problem described in
  Examples~\ref{ex:security}-\ref{ex:security-test} using
  Theorem~\ref{th:supervisor-symbol-attack} we need valid supervisors
  $f_{i j}$ that generate the language $K_{\rm safe}$ for the
  observation maps $P_{i j} := R_{\neg(\alpha_i \cup \alpha_j)}\circ P$, $\forall
  \alpha_i,\alpha_j\in\mathcal{A}_{\rm log-attacks}$. For $M\ge3$, all these supervisors
  can be the same and defined by
  \begin{align*}
    f_{ij}(y)=
    \begin{cases}
      \Sigma_{uc} \cup \{\text{grant access}\} & \text{$T(y)$ does not contain any ``exploit'' symbol}\\
      \Sigma_{uc} \cup \{\text{deauthorize}\} & \text{$T(y)$ contains some ``exploit'' symbols},
    \end{cases}
  \end{align*}
  where $T(y)$ denotes the last output symbols in the string $y$,
  starting right after the last ``grant access'' or ``deauthorize''
  symbols, or from the beginning of $y$ if it does not contain such
  symbols. Essentially $T(y)$ contains the output symbols starting from
  the last time that the system was in the ``clean'' state. These
  supervisors enforce the language $K_{\rm safe}$ because if $T(y)$
  does not contain any ``exploit'' symbol for the output map
  $P_{i j}:= R_{\neg(\alpha_i \cup \alpha_j)}\circ P$, then we must be in the ``clean''
  state (in case $P_{i j} $ did not remove any ``exploit'' symbol), in
  the ``s1'' state (in case $P_{i j}$ removed one ``exploit'' symbols),
  or in the ``s2'' state (in case $P_{i j}$ removed two ``exploit''
  symbols). In either case, it is okay to enable the ``grant access''
  transition and disable the ``deauthorize'' transition. On the other
  hand, if $T(y)$ does contain any ``exploit'' symbol, then we must be
  in one of the ``si'' states and it is safe to enable the
  ``deauthorize'' transition and disable the ``grant access''
  transition.

  \medskip

  For these supervisors $f_{i j}$, we have
  \begin{align*}
    \tilde f_{ij}(y)
    &=\begin{cases}
      \Sigma_{uc} \cup \{\text{grant access}\} & \text{$T(y)$ does not contain
        any ``exploit'' symbol other than those in $\alpha_i \cup \alpha_j$,}\\
      \Sigma_{uc} \cup \{\text{deauthorize}\} & \text{$T(y)$ contains some
        ``exploit'' symbols other those in $\alpha_i \cup \alpha_j$,}
    \end{cases}\\
    f^i (y) 
    &=\begin{cases}
      \Sigma_{uc} \cup \{\text{grant access}\} & \text{$T(y)$ does not contain
        any ``exploit'' symbol, other than that in $\alpha_i$,}\\
      \Sigma_{uc}   & \text{$T(y)$ contains one or more copies of the same
        ``exploit'' symbol, not in $\alpha_i$,}\\
      \Sigma_{uc} \cup \{\text{deauthorize}\}  & \text{$T(y)$ contains one or
        more copies of two distinct ``exploit'' symbols not in $\alpha_i$,}
    \end{cases}\\
    f(y)
    &=\begin{cases}
      \Sigma_{uc} \cup \{\text{grant access}\} & \text{$T(y)$ does not contain
        any ``exploit'' symbol,}\\
      \Sigma_{uc} \cup \{\text{grant access}\} & \text{$T(y)$ contains one or
        more copies of a single
        ``exploit'' symbol,}\\
      \Sigma_{uc} \cup \{\text{deauthorize}\}  & \text{$T(y)$ contains one or
        more copies of two distinct ``exploit'' symbols.}
    \end{cases}
  \end{align*}
  The implementation of this supervisor could be done with a finite
  state machine with $M+1$ symbols, one state would correspond to ``no
  exploit symbol observed in $T(y)$'' and the remaining $M$ states
  would be use to memorize the index of the 1st exploit symbol
  observed in $T(y)$. \hspace*{\fill} $\Box$
\end{example}

\begin{remark}[Complexity]
  While testing observability has polynomial complexity in the number
  of states of the \emph{plant automaton} $G$ representing the
  original language $L$ and on the number of states of the
  \emph{specification automaton} $G_K$ representing the desired
  language $K$, the number of states of the supervisor is typically
  exponential in the number of states of the plant automaton
  \cite{Tsitsiklis1989,CassandrasLafortune2008}. To apply
  Theorem~\ref{th:supervisor-symbol-attack}, we thus need to design
  $O(|\mathcal{A}|^2)$ supervisors, each with worst-case exponential
  complexity in the number of states of $G$, leading to a complexity
  of $O(|\mathcal{A}|^2 e^{|X|})$, where $|X|$ denotes the number of states
  of $G$.  If we were to use instead the reduction to a supervisory
  control problem without attacks discussed in
  Remark~\ref{re:expansion}, we would need to solve a single
  supervisor design problem. However, this problem would have
  worst-case exponential complexity on the number of states of an
  expanded automaton, leading to a much worse worst-case complexity of
  $O(e^{|\mathcal{A}| |X|})$. \hspace*{\fill} $\Box$
\end{remark}

\section{Conclusion}

Motivated by computer security applications, we introduced a
multi-adversary version of the DESs supervisory control problem, where
the supervisor is asked to enforce a desired language based on
measurements that have been corrupted by one of several potential
opponents, not know which is the actual opponent. For the particular
case of output-symbol attacks, we have show that testing observability
and constructing a supervisor that solves this multi-adversary problem
can be done using tools developed for the classical supervisory
control problem, without expanding the state-space of the original
plant automaton.

\medskip

Important direction for future research include the development of
efficient computational techniques for adversaries that use attacks
more general than output-symbol attacks and investigating distributed
solutions to the problem considered here. The latter is especially
interesting for scenarios with distributed cyber-security
sensors. As noted above, this paper also does not
  provide results for scenarios in which the controllability and
  observability conditions do not hold. We conjecture that, in the
  absence of observability under attacks, saddle-point policies for
  the attacker/supervisory that maximize/minimize the probability that
  the supervisor will make a mistake will be mixed, but we do not know
  of efficient algorithms to compute such policies. 


  



\bibliographystyle{spmpsci}      

\begin{thebibliography}{10}
	\providecommand{\url}[1]{{#1}}
	\providecommand{\urlprefix}{URL }
	\expandafter\ifx\csname urlstyle\endcsname\relax
	\providecommand{\doi}[1]{DOI~\discretionary{}{}{}#1}\else
	\providecommand{\doi}{DOI~\discretionary{}{}{}\begingroup
		\urlstyle{rm}\Url}\fi
	
	\bibitem{Amin2013}
	Amin, S., Litrico, X., Sastry, S., Bayen, A.M.: Cyber security of water {SCADA}
	systems--part i: Analysis and experimentation of stealthy deception attacks.
	\newblock IEEE Trans.~on Contr.~Syst.~Tech. \textbf{21}, 1963--1970 (2013)
	
	\bibitem{CassandrasLafortune2008}
	Cassandras, C.G., Lafortune, S.: Introduction to Discrete Event Systems, 2nd
	edn.
	\newblock Springer (2008)
	
	\bibitem{Chong2015}
	Chong, M.S., Wakaiki, M., Hespanha, J.P.: Observability of linear systems under
	adversarial attacks.
	\newblock In: Proc.~of the 2015 Amer.~Contr.~Conf. (2015)
	
	\bibitem{CVElist}
	Corporation, T.M.: Common vulnerabilities and exposures (cve) list (2018).
	\newblock \urlprefix\url{https://cve.mitre.org"}
	
	\bibitem{Dubreil2010}
	Dubreil, J., Darondeau, P., Marchand, H.: Supervisory control for opacity.
	\newblock IEEE Trans. Automat. Control \textbf{55}, 1089--1100 (2010)
	
	\bibitem{Fawzi2014}
	Fawzi, H., Tabuada, P., Diggavi, S.: Secure estimation and control for
	cyber-physical systems under adversarial attacks.
	\newblock IEEE Trans.~on Automat.~Contr. \textbf{59}, 1454--1467 (2014)
	
	\bibitem{FengWonham2006}
	Feng, L., Wonham, W.: {TCT}: A computation tool for supervisory control
	synthesis.
	\newblock In: 8th International Workshop on Discrete Event Systems, pp.
	388--389 (2006)
	
	\bibitem{Hubballi2011}
	Hubballi, N., Biswas, S., Roopa, S., Ratti, R., Nandi, S.: {LAN} attack
	detection using discrete event systems.
	\newblock ISA Trans. \textbf{50}, 119--130 (2011)
	
	\bibitem{ji2017rain}
	Ji, Y., Lee, S., Downing, E., Wang, W., Fazzini, M., Kim, T., Orso, A., Lee,
	W.: Rain: Refinable attack investigation with on-demand inter-process
	information flow tracking.
	\newblock In: Proceedings of the 2017 ACM SIGSAC Conference on Computer and
	Communications Security, pp. 377--390. ACM (2017)
	
	\bibitem{DESUMA2014}
	Lafortune, S., Ricker, L.: Desuma2 (2014).
	\newblock \url{https://wiki.eecs.umich.edu/desuma}
	
	\bibitem{Lin1993}
	Lin, F.: Robust and adaptive supervisory control of discrete event systems.
	\newblock IEEE Trans.~on Automat.~Contr. \textbf{38}, 1848--1852 (1993)
	
	\bibitem{Paoli2011}
	Paoli, A., Sartini, M., Lafortune, S.: Active fault tolerant control of
	discrete event systems using online diagnostics.
	\newblock Automatica \textbf{47}, 639--649 (2011)
	
	\bibitem{Ramadge1989}
	Ramadge, P.J., Wonham, W.M.: The control of discrete event systems.
	\newblock Proc.~of IEEE \textbf{77}, 81--98 (1989)
	
	\bibitem{Saboori2012}
	Saboori, A., Hadjicostis, C.N.: Opacity-enforcing supervisory strategies via
	state estimator constructions.
	\newblock IEEE Trans.~on Automat.~Contr. \textbf{57}, 1155--1165 (2012)
	
	\bibitem{Saboori2006}
	Saboori, A., Zad, S.H.: Robust nonblocking supervisory control of
	discrete-event systems under partial observation.
	\newblock Syst.~\& Contr.~Lett. \textbf{55}, 839--848 (2006)
	
	\bibitem{Sanchez2006}
	S\'anchez, A.M., Montoya, F.J.: Safe supervisory control under observability
	failure.
	\newblock Discrete Event Dyn. System: Theory Appl. \textbf{16}, 493--525 (2006)
	
	\bibitem{SheynerWing04}
	Sheyner, O., Wing, J.: Tools for generating and analyzing attack graphs.
	\newblock In: F.S. de~Boer, M.M. Bonsangue, S.~Graf, W.P. de~Roever (eds.)
	Formal Methods for Components and Objects: Second International Symposium,
	FMCO 2003, Leiden, The Netherlands, November 4-7, 2003. Revised Lectures, no.
	3188 in Lecture Notes on Computer Science, pp. 344--371. Springer-Verlag,
	Berlin (2004)
	
	\bibitem{Shoukry2016}
	Shoukry, Y., Tabuada, P.: Event-triggered state observers for sparse
	noise/attacks.
	\newblock IEEE Trans.~on Automat.~Contr. \textbf{61}(8), 2079--2091 (2016)
	
	\bibitem{Shu2014}
	Shu, S., Lin, F.: Fault-tolerant control for safety of discrete-event systems.
	\newblock IEEE Trans. Autom. Sci. Eng. \textbf{11}, 78--89 (2014)
	
	\bibitem{Takai2000}
	Takai, S.: Robust supervisory control of a class of timed discrete event
	systems under partial observation.
	\newblock Syst.~\& Contr.~Lett. \textbf{39}, 267--273 (2000)
	
	\bibitem{Takai2008}
	Takai, S., Oka, Y.: A formula for the supremal controllable and opaque
	sublanguage arising in supervisory control.
	\newblock SICE J Control Meas. System Integr. \textbf{1}, 307--311 (2008)
	
	\bibitem{Teixeira2015}
	Teixeira, A., Shames I.~Sandberg, H., Johansson, K.H.: A secure control
	framework for resource-limited adversaries.
	\newblock Automatica \textbf{51}, 135--148 (2015)
	
	\bibitem{Thorsley2006}
	Thorsley, D., Teneketzis, D.: Intrusion detection in controlled discrete event
	systems.
	\newblock In: Proc.~of the 45th Conf.~on Decision and Contr. (2006)
	
	\bibitem{Tsitsiklis1989}
	Tsitsiklis, J.N.: On the control of discrete-event dynamical systems.
	\newblock Math. Control Signals Systems pp. 96--107 (1989)
	
	\bibitem{Ushio2009}
	Ushio, T., Takai, S.: Supervisory control of discrete event systems modeled by
	{M}ealy automata with nondeterministic output functions.
	\newblock In: Proc.~of the 2009 Amer.~Contr.~Conf. (2009)
	
	\bibitem{Ushio2016}
	Ushio, T., Takai, S.: Nonblocking supervisory control of discrete event systems
	modeled by {M}ealy automata with nondeterministic output functions.
	\newblock IEEE Trans.~on Automat.~Contr. \textbf{61}(3), 799--804 (2016)
	
	\bibitem{Whittaker2008}
	Whittaker, S.J., Zulkernine, M., Rudie, K.: Toward incorporating discrete-event
	systems in secure software development.
	\newblock In: Proc. ARES'08 (2008)
	
	\bibitem{Wonham_Lec}
	Wonham, W.M.: Supervisory control of discrete-event systems (2014).
	\newblock \urlprefix\url{http://www.control.toronto.edu/DES/}
	
	\bibitem{Wu2014}
	Wu, Y.C., Lafortune, S.: Synthsis of insertion functions for enforcement of
	opacity security properties.
	\newblock Automatica \textbf{50}, 1336--1348 (2014)
	
	\bibitem{Xu2009}
	Xu, S., Kumar, R.: Discrete event control under nondeterministic partial
	observation.
	\newblock In: Proc. IEEE CASE'09 (2009)
	
	\bibitem{yin2017supervisor}
	Yin, X.: Supervisor synthesis for {M}ealy automata with output functions: A
	model transformation approach.
	\newblock IEEE Transactions on Automatic Control \textbf{62}(5), 2576--2581
	(2017)
	
	\bibitem{Yoo2002}
	Yoo, T.S., Lafortune, S.: A general architecture for decentralized supervisory
	control of discrete-event systems.
	\newblock Discrete Event Dyn. System: Theory Appl. \textbf{12}, 335--377 (2002)
	
\end{thebibliography}

\end{document}